\newtheorem{theorem}{Theorem}[section]
\newtheorem{corollary}[theorem]{Corollary}
\theoremstyle{remark}
\title{Maximum Entropy Estimation of Heterogeneous Causal Effects}
\author{Brian Knaeble}
\email{bknaeble@uvu.edu}
\author{Mehdi Hakim-Hashemi}
\author{Mark A. Abramson}
\keywords{entropy, propensity, prognosis, causal inference}
\begin{document}

\begin{abstract}
For the purpose of causal inference we employ a stochastic model of the data generating process, utilizing individual propensity probabilities for the treatment, and also individual and counterfactual prognosis probabilities for the outcome. We assume a generalized version of the stable unit treatment value assumption, but we do not assume any version of strongly ignorable treatment assignment. Instead of conducting a sensitivity analysis, we utilize the principle of maximum entropy to estimate the distribution of causal effects. We develop a principled middle-way between extreme explanations of the observed data: we do not conclude that an observed association is wholly spurious, and we do not conclude that it is wholly causal. Rather, our conclusions are tempered and we conclude that the association is part spurious and part causal. In an example application we apply our methodology to analyze an observed association between marijuana use and hard drug use.
\end{abstract}

\maketitle

\section{Introduction}
We start with a thought experiment. Suppose one million coins are flipped and approximately $600,000$ come up heads. Here are two explanations.
\begin{enumerate}
\item Each coin has probability of heads equal to approximately $0.6$.
\item Half the coins have probability of heads equal to approximately $0.9$, and half the coins have probability of heads equal to approximately $0.3$.
\end{enumerate}
There are many other explanations. Which explanation is best?

The first explanation is reasonable and supported by the principle of maximum entropy. The principle of maximum entropy expresses a claim of epistemic modesty. The principle guides us to select a distribution based on known information. Suppose we know that the variance of the probabilities is $0.09$. In light of that knowledge the second explanation is good.

In this paper we will show how to utilize the principle of maximum entropy to estimate a distribution of causal effects. Our running example application is taken from an observational study \cite{PATH} that observed an association between marijuana use and hard drug use. Frequency counts are displayed in Table \ref{tab2}. We are interested in whether or not marijuana is a gateway drug. We are interested in whether the association between marijuana and hard drug use admits a causal interpretation, and specifically what kind of causal interpretation. We are asking the following question. Could marijuana use cause hard drug use for some individuals but not others?
\begin{table}[ht]
\centering
\caption{A $2\times 2$ contingency table showing an association (odds ratio OR$=25$) between Marijuana use and hard drug use.}
\label{tab2}
\begin{tabular}{rcc}
\toprule
 & \multicolumn{2}{c}{Marijuana}\\
\cmidrule{2-3}
Hard Drugs&No&Yes\\
\hline
Yes&81&796\\
No&4,201&1,680\\
\hline
\end{tabular}
\end{table}

If within our model of the data generating process we utilize propensity probabilities of exposure to marijuana, then by symmetry we should also utilize prognosis probabilities for the outcome of hard drug use, and those prognosis probabilities may depend on exposure. Even if we accept that marijuana causes hard drug use, a given individual is not destined nor determined to use hard drugs at the time when they first use marijuana. If the process leading to marijuana is stochastic, then the process leading to hard drug use is likely stochastic as well. Marijuana use may affect the parameters of a counterfactual distribution, so we need a mathematical framework for reasoning with stochastic counterfactuals. There is not a lot of literature about stochastic counteractuals \cite{VR}, but our work here is closely related to sensitivity analysis for causal inference. This current work builds upon \cite{KOA} and \cite{Knaeble2023}, where coefficients of determination summarize unmeasured sets of covariates. 

Here, our main contribution is the following: within a general, propensity-prognosis model of the data generating process, we will show how to utilize the principle of maximum entropy to estimate distributions of heterogeneous causal effects. We do so with measured covariate data and transported coefficients of determination. Additional contributions include mathematical results: one that expresses Tjur's Pseudo $R^2$ parameter \cite{Tjur2009} within our framework, and others that support principled inference of homogeneous causal effects. We also introduce a practical algorithm to estimate distributions of heterogeneous causal effects. Our notation is introduced in Section \ref{methods}. Mathematical results are described in Section \ref{results}. The details of our example application are described in Section \ref{applications}. A discussion occurs in Section \ref{discussion}.
\section{Methods}
\label{methods}
For individual $i$ we write $e_i$ to indicate exposure (or treatment) and $d_i$ to indicate a dichotomous outcome (or disease). We write $\pi_i$ for propensity probability of exposure, $r_{0i}$ for prognosis probability of the outcome in the absence of exposure, and $r_{1i}$ for prognosis probability of the outcome in the presence of exposure. For individual $i$ causation is present if $r_{0i}\neq r_{1i}$. Here is a formal summary. \begin{itemize}
\item $e_i\sim \textrm{Bernoulli}(\pi_i)$,
\item $d_i(e_i=0)\sim \textrm{Bernoulli}(r_{0i})$, and
\item $d_i(e_i=1)\sim \textrm{Bernoulli}(r_{1i})$.
\end{itemize}

In the language of \cite{Imbens2015} our processes are assumed to be individualistic and probabilistic, and we have assumed for each $i$ that $e_i$ does not depend on $(d_i(e_i=0),d_i(e_i=1))$ (see the latter requirement of SUTVA in \cite[Section 3.2]{Imbens2015}). We will utilize a vector of measured covariates $x$, but we will not assume that the processes are unconfounded, even conditional on $x$. In other words, we do not accept the assumption of strongly ignorable treatment assignment \cite[Section 3.4, Definition 3.6]{Imbens2015}. We measure $\{e_i,d_i,x_i\}_{i=1}^N$ while $\{\pi_i,r_{0i},r_{1i}\}_{i=1}^N$ are latent and unknown.

Within our model, information entropy, as defined in \cite{Shannon1948}, is given by the following mathematical function of Bernoulli parameter variables.
\begin{align*}
H(\pi,r_{0},r_{1})&=-\pi \log(\pi)-(1-\pi)\log(1-\pi)
\\&-(1-\pi)(r_0\log(r_0)+(1-r_0)\log(1-r_0))
\\&-\pi(r_1\log(r_1)+(1-r_1)\log(1-r_1))
\\&= -(1-\pi)r_0\log((1-\pi)r_0)-\pi r_1\log(\pi r_1)\\&-(1-\pi)(1-r_0)\log((1-\pi)(1-r_0))\\&-\pi(1-r_1)\log(\pi(1-r_1)).
\end{align*}

The following is our basic, maximum entropy, optimization problem.
\begin{subequations}
\label{basic}
\begin{align}
H_\star &= \max_{\{\pi_i,r_{0i},r_{1i}\}_{i=1}^N} \  \sum_{i=1}^N H(\pi_i,r_{0i},r_{1i}), \
\textrm{subject to}\\
\sum (1-\pi_i)r_{0i}/N \ &= \ P(e=0,d=1) \\
\sum \pi_i r_{1i}/N \ &= \ P(e=1,d=1) \\
\sum (1-\pi_i)(1-r_{0i})/N \ &= \ P(e=0,d=0) \\
\sum \pi_i(1-r_{1i})/N  \ &= \ P(e=1,d=0).
\end{align}
\end{subequations}
The reasoning behind our equality constraints is that we'd like for the heterogeneous parameters $\{\pi_i,r_{0i},r_{1i}\}_{i=1}^N$ to provide a consistent explanation for the observed data $\{e_i,d_i,x_i\}_{i=1}^N$. 

We take measured attributes of individuals into account by categorizing each of the $N$ individuals into $d$ categories, $x_1,x_2,...,x_c,...,x_d$, where $c$ is the indexing variable, and the corresponding numbers of individuals in each category are $N_1,N_2,...,N_c,...,N_d$ and satisfy $\sum_{c=1}^dN_c=N$. The categories should be defined from a covariate vector $x$ which takes values in a bounded set $X$. 
For each individual, covariate measurements should be made prior to exposure and at or prior to the time when $\pi$ is defined. 
We write $I_c$ as an indicator function that takes the value $1$ if an individual is in category $x_c$ and the value $0$ otherwise. Here is a maximum entropy problem that incorporates measured covariate data.  
\begin{subequations}
\label{c}
\begin{align}
H_\star &= \max_{\{\pi_i,r_{0i},r_{1i}\}_{i=1}^N} \  \sum_{i=1}^N H(\pi_i,r_{0i},r_{1i}), \textrm{~subject to},\\ 
\forall c\in \{1,\cdots,d\}, \sum I_c(1-\pi_i)r_{0i}/N_c \ &= \ P(e=0,d=1|x=x_c), \\
\sum I_c\pi_i r_{1i}/N_c \ &= \ P(e=1,d=1|x=x_c), \\
\sum I_c(1-\pi_i)(1-r_{0i})/N_c \ &= \ P(e=0,d=0|x=x_c), \\
\sum I_c\pi_i(1-r_{1i})/N_c  \ &= \ P(e=1,d=0|x=x_c).
\end{align}
\end{subequations}

Define the expected, individual risk, $r=\pi r_1+(1-\pi)r_0$. Define Tjur's pseudo $R^2$ as ``the difference between the averages of fitted values for successes and failures'' \cite{Tjur2009}. We write $R^2_{p;\pi}$ and $R^2_{p;r}$ for practical parameters of models reported in the literature, and we write $R^2_{t;\pi,e}$ and $R^2_{t;r,d}$ for theoretical parameters defined from an idealized model with fitted values equal to the true $\pi$ and $r$ values; see Theorem \ref{th2} and Corollary \ref{cor2}. Here is an expanded maximum entropy problem with two additional constraints:
\begin{subequations}
\label{cR}
\label{expanded}
\begin{align}
H_\star &= \max_{\{\pi_i,r_{0i},r_{1i}\}_{i=1}^N} \  \sum_{i=1}^N H(\pi_i,r_{0i},r_{1i}), \textrm{~subject to}
\\
\label{piineq}\sum (\pi_i-\bar{\pi})^2/N &\geq P(e=1)(1-P(e=1))R^2_{p;\pi},\\
\label{rineq}\sum (r_i-\bar{r})^2/N &\geq P(d=1)(1-P(d=1))R^2_{p;r}, \textrm{~and}\\
\forall c\in \{1,...,d\}, \sum I_c(1-\pi_i)r_{0i}/N_c \ &= \ P(e=0,d=1|x=x_c), \\
\sum I_c\pi_i r_{1i}/N_c \ &= \ P(e=1,d=1|x=x_c), \\
\sum I_c(1-\pi_i)(1-r_{0i})/N_c \ &= \ P(e=0,d=0|x=x_c), \\
\sum I_c\pi_i(1-r_{1i})/N_c  \ &= \ P(e=1,d=0|x=x_c).
\end{align}
\end{subequations}
To solve the maximization problem of (\ref{expanded}) we reformulate the problem with a change of variables resulting in a linear objective function and linear equality constraints. Any distribution of $\{\pi_i,r_{0i},r_{1i}\}$ values  is supported on the open unit cube \[C=\{(\pi,r_0,r_1):0<\pi,r_0,r_1<1\}.\] Given a large, natural number $m$ we define for $j,k,l=1,...m$ a subcube \[C_{j,k,l}=\{(\pi,r_0,r_1):(j-1)/m<\pi<j/m,(k-1)/m<r_0<k/m,(l-1)/m<r_1<l/m\}.\] We write $(\pi_j,r_{0k},r_{1l})$ as the centers of those subcubes. For each $c\in\{1,...,d\}$ the proportion of mass of a distribution on $C\times X$ that both lies within a subcube $C_{j,k,l}$ and has $x=x_c$ is expressed with a weight $w_{c,j,k,l}$, and the sum of all weights is required to be unity. We assume sufficiently large $N$ and $m$ values and allow fractional weights smaller than $1/N$. The maximization problem in (\ref{expanded}) becomes
\begin{subequations}
\label{transformed}
\begin{align}
&H_\star \approx \max_{\{(w_{c,j,k,l})\}_{j=1,k=1,l=1}^m} \\
&-\left(\sum_{c=1}^d\sum_{j=1}^m\sum_{k=1}^m\sum_{l=1}^m Nw_{c,j,k,l}(1-\pi_j)r_{0k}\log (1-\pi_j)r_{0k}\right)\\
&-\left(\sum_{c=1}^d\sum_{j=1}^m\sum_{k=1}^m\sum_{l=1}^m Nw_{c,j,k,l}\pi_j(r_{1l})\log \pi_j(r_{1l})\right)\\
&-\left(\sum_{c=1}^d\sum_{j=1}^m\sum_{k=1}^m\sum_{l=1}^m Nw_{c,j,k,l}(1-\pi_j)(1-r_{0k})\log (1-\pi_j)(1-r_{0k})\right)\\
&-\left(\sum_{c=1}^d\sum_{j=1}^m\sum_{k=1}^m\sum_{l=1}^mNw_{c,j,k,l}\pi_j(1-r_{1l})\log \pi_j(1-r_{1l})\right)\\
\nonumber &\textrm{subject to},\\
\label{fcineq}   
& \sum_{c=1}^d\sum_{j=1}^m\sum_{k=1}^m\sum_{l=1}^m w_{c,j,k,l}(\pi_j-P(e=1))^2\geq P(e=1)(1-P(e=1))R^2_{p;\pi}, \\
\label{scineq}& \sum_{c=1}^d\sum_{j=1}^m\sum_{k=1}^m\sum_{l=1}^m w_{c,j,k,l}(r_{k,l}-P(d=1))^2\geq P(d=1)(1-P(d=1))R^2_{p;r},\\
\nonumber &\textrm{and~} \forall c\in\{1,...,d\},\\
&\sum_{j=1}^m\sum_{k=1}^m\sum_{l=1}^mI_cw_{c,j,k,l}(1-\pi_j)r_{0k}   =  P(e=0,d=1|x=x_c), \\
& \sum_{j=1}^m\sum_{k=1}^m\sum_{l=1}^mI_cw_{c,j,k,l}\pi_j(r_{1l})    =  P(e=1,d=1|x=x_c), \\
& \sum_{j=1}^m\sum_{k=1}^m\sum_{l=1}^mI_cw_{c,j,k,l}(1-\pi_j)(1-r_{0k})    =  P(e=0,d=0|x=x_c), \\
& \sum_{j=1}^m\sum_{k=1}^m\sum_{l=1}^mI_cw_{c,j,k,l}\pi_j(1-r_{1l})    =  P(e=1,d=0|x=x_c).
\end{align}
\end{subequations}
Each line of the transformed problem (\ref{transformed}) is linear in its variables (the weights), and it can therefore be solved with linear programming, as long as the measured covariates are low-dimensional. The number of variables in the LP problem is $dm^3$. In our example application of Section \ref{applications} we have $d=10$, and we had good run times and decent accuracy with $40<m<95$, see Appendix \ref{AppB}. We implemented our algorithm in Matlab, and our code is available upon request.
\section{Results}
\label{results}
The results of this section support our formulations of the optimization problems of Section \ref{methods} and also the applications of Section \ref{applications}.
\begin{theorem}[Homogeneous Effects] 
\label{th1} The solution to the basic maximization problem in (\ref{basic}) is a single point mass distribution: for each and every individual
\[(\pi_i,r_{0i},r_{1i})=(P(e=1),P(d=1|e=0),P(d=1|e=1).\]
\end{theorem}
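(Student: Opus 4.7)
The plan is to recognize the per-individual objective as the Shannon entropy of the induced four-outcome distribution over $(e,d)$, and then invoke strict concavity of Shannon entropy via Jensen's inequality.

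First, I would change variables. For each individual $i$, set $q_{1i}=(1-\pi_i)r_{0i}$, $q_{2i}=\pi_i r_{1i}$, $q_{3i}=(1-\pi_i)(1-r_{0i})$, $q_{4i}=\pi_i(1-r_{1i})$, so $q_i=(q_{1i},q_{2i},q_{3i},q_{4i})$ lies in the open $3$-simplex $\Delta^\circ$ of distributions on the four cells of the $(e,d)$ table. The map $(\pi_i,r_{0i},r_{1i})\mapsto q_i$ is a bijection from the open unit cube $C$ onto $\Delta^\circ$, with inverse $\pi_i=q_{2i}+q_{4i}$, $r_{0i}=q_{1i}/(q_{1i}+q_{3i})$, $r_{1i}=q_{2i}/(q_{2i}+q_{4i})$. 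The second expression for $H$ displayed in the paper shows $H(\pi_i,r_{0i},r_{1i})=-\sum_{k=1}^4 q_{ki}\log q_{ki}=:h(q_i)$, which is the standard Shannon entropy.

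Second, the four equality constraints of (\ref{basic}) become $\tfrac{1}{N}\sum_{i=1}^N q_{ki}=P_k$ for $k=1,\dots,4$, where $P=(P(e=0,d=1),P(e=1,d=1),P(e=0,d=0),P(e=1,d=0))$ is the observed joint distribution; note that $P$ itself lies in $\Delta^\circ$ under the standing assumption that all four cells are observed, so the constraint set is nonempty (e.g.\ $q_i\equiv P$ is feasible). The problem reduces to maximizing $\sum_{i=1}^N h(q_i)$ over tuples in $(\Delta^\circ)^N$ whose coordinate-wise average equals $P$. Applying Jensen's inequality to the strictly concave $h$ gives
\begin{equation*}
\frac{1}{N}\sum_{i=1}^N h(q_i)\ \leq\ h\!\left(\frac{1}{N}\sum_{i=1}^N q_i\right)\ =\ h(P),
\end{equation*}
with equality if and only if $q_1=\cdots=q_N$. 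Since $h(P)$ is a constant fixed by the constraints, the unique feasible maximizer is $q_i=P$ for every $i$. Translating back through the bijection yields $\pi_i=P(e=1)$, $r_{0i}=P(e=0,d=1)/P(e=0)=P(d=1\mid e=0)$, and $r_{1i}=P(e=1,d=1)/P(e=1)=P(d=1\mid e=1)$, as claimed.

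The only non-routine step is recognizing that $H$ coincides with Shannon entropy on the induced joint distribution; once this identification is made, strict concavity of Shannon entropy and Jensen's inequality do essentially all the work, and there is no real analytical obstacle. The minor bookkeeping points to verify are the bijectivity of the cube-to-simplex reparametrization and the fact that the average constraint $\bar q = P$ fixes the right-hand side of the Jensen bound at an attainable value.
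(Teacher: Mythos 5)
Your proof is correct, but it takes a genuinely different route from the paper. The paper forms the Lagrangian in the original variables $(\pi_i,r_{0i},r_{1i})$, derives first-order stationarity conditions, and argues from those that all individuals must share the same parameter values before back-substituting into the constraints. You instead reparametrize each individual's triple as the induced joint distribution $q_i$ over the four $(e,d)$ cells, observe (as the paper's second displayed form of $H$ already makes explicit) that the objective is the Shannon entropy $h(q_i)$, and apply Jensen's inequality with strict concavity. Your route buys something the paper's argument does not deliver on its face: in the original coordinates the constraints are bilinear and the objective is not obviously concave, so the stationarity conditions are only necessary, and the paper never verifies that its stationary point is a global maximizer (nor that it is unique); after your change of variables the problem is a strictly concave maximization over a convex (linearly constrained) set, so Jensen gives global optimality and uniqueness of the maximizer in one stroke. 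The cost is the small amount of bookkeeping you correctly flag --- checking that the cube-to-simplex map is a bijection with the stated inverse and that $q_i\equiv P$ is feasible --- whereas the paper's calculation, if completed with a second-order or concavity argument, would yield the explicit multiplier structure that generalizes more directly to the constrained variants in the rest of the paper.
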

\begin{proof}
Define $p_{01}=P(e=0,d=1)$, $p_{11}=P(e=1,d=1)$, $p_{00}=P(e=0,d=0)$, and $p_{10}=P(e=1,d=0)$. The Lagrangian for the original problem with no variance constraints is given by
\begin{eqnarray*}
 L &=& \sum_{i=1}^N \left\{ \pi_i \log \pi_i + (1 - \pi_i) \log(1 - \pi_i) \right. \\
   &+&  (1 - \pi_i) \left[ r_{0i} \log r_{0i} + (1 - r_{0i}) \log(1-r_{0i}) \right] \\
   &+& \left.      \pi_i  \left[ r_{1i} \log r_{1i} + (1 - r_{1i}) \log(1-r_{1i}) \right] \right\} \\
   &+& \lambda_1 \left[ \sum_{i=1}^N (1 - \pi_i)    r_{0i}  - N p_{01} \right] + \lambda_2 \left[ \sum_{i=1}^N \pi_i    r_{1i}  - N p_{11} \right] \\
   &+& \lambda_3 \left[ \sum_{i=1}^N (1 - \pi_i) (1-r_{0i}) - N p_{00} \right] + \lambda_4 \left[ \sum_{i=1}^N \pi_i (1-r_{1i}) - N p_{01} \right].
\end{eqnarray*}

Taking partial derivatives and grouping terms, we obtain optimality conditions; namely, for all $i = 1, 2, \ldots, N$,
\begin{eqnarray*}
  \pi_i \log \pi_i - (1 - \pi_i) \log(1 - \pi_i) - r_{0i} \log r_{0i} - (1 - r_{0i}) \log(1-r_{0i}) && \\
   + r_{1i} \log r_{1i} + (1 - r_{1i}) \log(1-r_{1i}) && \\
   - \lambda_1 r_{0i} + \lambda_2 r_{1i} - \lambda_3 (1 - r_{0i}) + \lambda_4 (1 - r_{1i}) &=& 0 \\~\\
  (1 - \pi_i) \left[ \log r_{0i} - \log (1 - r_{0i}) + \lambda_1 - \lambda_3 \right] &=& 0 \\
       \pi_i  \left[ \log r_{1i} - \log (1 - r_{1i}) + \lambda_2 - \lambda_4 \right] &=& 0.
\end{eqnarray*}

Since $\pi_i \in (0,1)$, we can solve the second and third equations for $r_{0i}$ and $r_{1i}$, which gives
\begin{eqnarray*}
  r_{0i} &=& \dfrac{1}{1 + \exp(-\lambda_3/\lambda_1)}, \quad i = 1, 2, \ldots, N \\
  r_{1i} &=& \dfrac{1}{1 + \exp(-\lambda_4/\lambda_2)}, \quad i = 1, 2, \ldots, N.
\end{eqnarray*}
That is, $r_{0i}$ and $r_{1i}$ must have the same values for all $i$.  Similarly, substituting these expressions into the first optimality equation, we see by the same argument that $\pi_i$ have the same value for all $i$.  Then by substitution, the constraints become simply
\begin{eqnarray*}
  (1 - \pi_i)      r_{0i}  = p_{01}, && \pi_i      r_{1i}  = p_{11}, \\
  (1 - \pi_i) (1 - r_{0i}) = p_{00}, && \pi_i (1 - r_{1i}) = p_{10}.
\end{eqnarray*}
Adding the second and fourth equations yields $\pi_i = p_{11} + p_{10} = P(e = 1)$.  From here, we get
\begin{eqnarray*}
  r_{0i} = \dfrac{p_{01}}{1 - \pi_i} = \dfrac{p_{01}}{p_{01} + p_{00}}, && r_{1i} = \dfrac{p_{11}}{\pi_i}     = \dfrac{p_{11}}{p_{11} + p_{10}}.
\end{eqnarray*}
\end{proof}
\begin{corollary}[Conditional Homogeneous Effects]
\label{condhomo}
The solution to the maximization problem in (\ref{c}) consists of a mixture of single point mass distributions, each conditional on $x$: $\forall x_c\in X$, and for each individual with $x=x_c$, we have
\[(\pi_i,r_{0i},r_{1i})=(P(e=1|x=x_c),P(d=1|e=0,x=x_c),P(d=1|e=1,x=x_c).\]
\end{corollary}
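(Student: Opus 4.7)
The plan is to reduce Corollary \ref{condhomo} to Theorem \ref{th1} by exploiting the fact that both the objective and the constraints of (\ref{c}) decompose across the covariate categories. Since each individual $i$ belongs to exactly one category, the indicator $I_c(i)$ partitions the index set $\{1,\ldots,N\}$ into disjoint blocks $B_c=\{i:x_i=x_c\}$ with $|B_c|=N_c$. The objective $\sum_{i=1}^N H(\pi_i,r_{0i},r_{1i})$ can therefore be rewritten as $\sum_{c=1}^d \sum_{i\in B_c} H(\pi_i,r_{0i},r_{1i})$, and each of the four families of constraints in (\ref{c}) involves only the variables indexed by $B_c$.

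Given this decomposition, I would argue that the joint maximization problem splits into $d$ independent subproblems, one per category. The $c$-th subproblem reads: maximize $\sum_{i\in B_c} H(\pi_i,r_{0i},r_{1i})$ subject to
\begin{align*}
\sum_{i\in B_c}(1-\pi_i)r_{0i}/N_c &= P(e=0,d=1\mid x=x_c), \\
\sum_{i\in B_c}\pi_i r_{1i}/N_c &= P(e=1,d=1\mid x=x_c), \\
\sum_{i\in B_c}(1-\pi_i)(1-r_{0i})/N_c &= P(e=0,d=0\mid x=x_c), \\
\sum_{i\in B_c}\pi_i(1-r_{1i})/N_c &= P(e=1,d=0\mid x=x_c).
\end{align*}
This is structurally identical to the basic problem (\ref{basic}): only the sample size ($N_c$ in place of $N$) and the right-hand sides (conditional rather than marginal joint probabilities) differ. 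Because the conditional probabilities $P(e=\cdot,d=\cdot\mid x=x_c)$ form a valid probability distribution on $\{0,1\}^2$, Theorem \ref{th1} applies verbatim.

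Applying Theorem \ref{th1} within each category yields, for all $i\in B_c$,
\[ (\pi_i,r_{0i},r_{1i}) \;=\; \bigl(P(e=1\mid x=x_c),\,P(d=1\mid e=0,x=x_c),\,P(d=1\mid e=1,x=x_c)\bigr), \]
which is precisely the claimed formula. The aggregate distribution over the full population is then a mixture of the $d$ point masses with weights $N_c/N$.

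The only point requiring a bit of care, and the main obstacle, is justifying the separation of the optimization across categories: one must note that no variable $(\pi_i,r_{0i},r_{1i})$ appears in any constraint outside its own block $B_c$ (since $I_c(i)=0$ for $i\notin B_c$) and that the objective is additive in $i$; a saddle-point argument with separate Lagrange multipliers $(\lambda_1^{(c)},\lambda_2^{(c)},\lambda_3^{(c)},\lambda_4^{(c)})$ for each category then reproduces, block by block, exactly the KKT system solved in the proof of Theorem \ref{th1}. Once this separation is articulated, no new computation is needed.
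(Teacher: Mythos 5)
Your proposal is correct and matches the route the paper intends: the paper offers no separate proof, treating the corollary as immediate from Theorem \ref{th1} via exactly the block decomposition you describe (the objective is additive over individuals and each constraint of (\ref{c}) involves only the variables in one category, so the problem separates into $d$ independent copies of (\ref{basic}) with conditional probabilities on the right-hand sides). Your explicit articulation of why the separation is valid is a worthwhile addition rather than a deviation.
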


\begin{theorem}[Proportion of Variation Explained]
\label{th2} Let $R^2_{t;\pi,e}$ denote Tjur's pseudo-$R^2$ parameter between $\pi$ and $e$ as defined in Section \ref{methods}. Let $\sigma^2_\pi$ be the variance of the latent $\pi$ values, and let $P(e=1)$ be the observed proportion of individuals with $e=1$. Then,
\[R^2_{t;\pi,e}=\frac{\sigma^2_{\pi}}{P(e=1)(1-P(e=1)}.\]
\end{theorem}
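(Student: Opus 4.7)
The plan is to unwind the definition of Tjur's pseudo-$R^2$ in the idealized setting where the fitted values equal the true propensities $\pi_i$, and then reduce the resulting expression to a ratio involving $\mathrm{Var}(\pi)$. By definition, Tjur's coefficient is the difference between the average fitted value on the successes ($e=1$) and the average fitted value on the failures ($e=0$); so I would begin by writing
\[
R^2_{t;\pi,e} \;=\; E[\pi \mid e=1] - E[\pi \mid e=0].
\]

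Next I would compute each conditional expectation using iterated expectation and the fact that $e \mid \pi \sim \mathrm{Bernoulli}(\pi)$. The key identities are $P(e=1)=E[e]=E[\pi]$, $E[\pi\, e] = E[\pi \cdot E[e\mid\pi]] = E[\pi^2]$, and similarly $E[\pi(1-e)] = E[\pi] - E[\pi^2]$. Plugging in gives
\[
E[\pi \mid e=1] = \frac{E[\pi^2]}{E[\pi]}, \qquad
E[\pi \mid e=0] = \frac{E[\pi]-E[\pi^2]}{1-E[\pi]}.
\]

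Then I would subtract these two fractions over the common denominator $E[\pi](1-E[\pi])$; after expanding the numerator, the cross terms $E[\pi]E[\pi^2]$ cancel and the numerator collapses to $E[\pi^2] - (E[\pi])^2 = \sigma^2_\pi$. Since $E[\pi]=P(e=1)$, the denominator is exactly $P(e=1)(1-P(e=1))$, which is the claimed formula.

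The only real subtlety—and the step where I would be most careful in the write-up—is the reduction of the sample averages $\frac{1}{N_1}\sum_{e_i=1}\pi_i$ and $\frac{1}{N_0}\sum_{e_i=0}\pi_i$ to the conditional expectations $E[\pi\mid e=1]$ and $E[\pi\mid e=0]$; this is where the word ``theoretical'' in the definition of $R^2_{t;\pi,e}$ matters, since we are passing to the idealized population-level parameter rather than a finite-sample fitted model. Once that identification is made explicit, the rest is an algebraic identity whose proof is two or three lines of arithmetic.
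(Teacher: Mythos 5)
Your proposal is correct and follows essentially the same route as the paper's proof: both start from $R^2_{t;\pi,e}=E(\pi\mid e=1)-E(\pi\mid e=0)$, compute the conditional expectations as $E(\pi^2)/\bar\pi$ and $(E(\pi)-E(\pi^2))/(1-\bar\pi)$, and combine over the common denominator $\bar\pi(1-\bar\pi)$ so the numerator collapses to $\sigma^2_\pi$. Your explicit remark about passing from finite-sample averages to population-level conditional expectations is handled in the paper only by the phrase ``in expectation \ldots since asymptotically $P(e=1)=\bar\pi$,'' so making it explicit is a minor improvement rather than a difference in approach.
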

\begin{proof}
In expectation, with $\mu$ as the probability measure associated with the distribution of $\pi$ values, and with $\bar{\pi}:=\int_{[0,1]}\pi d\mu$, since asymptotically $P(e=1)=\bar{\pi}$ we have 
\begin{subequations}
\begin{align}R^2_{t;\pi,e}&:=E(\pi|e=1)-E(\pi|e=0)\\
&=\frac{1}{\bar{\pi}}\int_{[0,1]} \pi\pi d\mu-\frac{1}{1-\bar{\pi}}\int_{[0,1]}(1-\pi)\pi d\mu\\
&=\frac{1-\bar{\pi}}{\bar{\pi}(1-\bar{\pi})}\int_{[0,1]} \pi\pi d\mu-\frac{\bar{\pi}}{\bar{\pi}(1-\bar{\pi})}\int_{[0,1]}(1-\pi)\pi d\mu\\
&=\frac{1}{\bar{\pi}(1-\bar{\pi})}\int_{[0,1]} \pi\pi d\mu-\frac{\bar{\pi}}{\bar{\pi}(1-\bar{\pi})}\int_{[0,1]}\pi d\mu\\
&=\frac{1}{\bar{\pi}(1-\bar{\pi})}\left(E(\pi^2)-E(\pi)^2\right)\\
&=\frac{\sigma^2_\pi}{P(e=1)(1-P(e=1))}.
\end{align} 
\end{subequations}
\end{proof}
Likewise, with $r=(1-\pi)r_0+\pi r_1$, we have the following corollary.
\begin{corollary}[Proportion of Variation Explained]
\label{cor2} Let $R^2_{t;r,d}$ denote Tjur's pseudo-$R^2$ parameter between $r$ and $d$ as defined in Section \ref{methods}. Let $\sigma^2_r$ be the variance of the latent $r$ values, and let $P(d=1)$ be the observed proportion of individuals with $d=1$. Then,
\[R^2_{t;r,e}=\frac{\sigma^2_{r}}{P(d=1)(1-P(d=1)}.\]
\end{corollary}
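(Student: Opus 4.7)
The plan is to reduce Corollary \ref{cor2} directly to Theorem \ref{th2} by observing that the pair $(r,d)$ plays exactly the same structural role as $(\pi,e)$. First, I would establish the key fact that, conditional on the latent $r$ value, the outcome $d$ is Bernoulli$(r)$. In the data-generating model, integrating out $e$ gives $P(d=1\mid \pi, r_0, r_1) = \pi r_1 + (1-\pi) r_0 = r$, and since $r$ is a deterministic function of $(\pi, r_0, r_1)$, the tower property yields $P(d=1\mid r) = r$. In particular, asymptotically $P(d=1) = \bar r := \int r\,d\nu$, where $\nu$ denotes the induced distribution of latent $r$ values.

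Second, I would write out Tjur's definition
\[
R^2_{t;r,d} = E(r\mid d=1) - E(r\mid d=0),
\]
and use the identities $P(d=1, r\in A) = \int_A r\,d\nu$ and $P(d=0, r\in A) = \int_A (1-r)\,d\nu$ to obtain
\[
E(r\mid d=1) = \frac{E(r^2)}{\bar r}, \qquad E(r\mid d=0) = \frac{\bar r - E(r^2)}{1-\bar r}.
\]
Subtracting and placing the two fractions over the common denominator $\bar r (1-\bar r)$ collapses the numerator to $E(r^2) - \bar r^2 = \sigma_r^2$, giving the claimed identity. This is the same sequence of algebraic manipulations as lines (a)--(f) in the proof of Theorem \ref{th2}, with $(r, d, \sigma_r^2, P(d=1))$ substituted for $(\pi, e, \sigma_\pi^2, P(e=1))$.

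The main obstacle, modest as it is, lies in justifying the substitution: in Theorem \ref{th2} the variable $\pi$ was a primitive Bernoulli parameter of $e$ by construction, whereas here $r$ is a composite built from the three latent parameters $(\pi, r_0, r_1)$ rather than a primitive. The conditioning step $P(d=1\mid r) = r$ above is precisely what licenses treating $r$ as an effective Bernoulli parameter for $d$, so that Theorem \ref{th2}'s argument carries over verbatim.
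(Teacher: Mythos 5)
Your proposal is correct and matches the paper's intent exactly: the paper proves the corollary simply by saying ``likewise,'' i.e., by rerunning the computation of Theorem \ref{th2} with $(r,d)$ in place of $(\pi,e)$, which is precisely what you do. Your explicit justification that $P(d=1\mid r)=\pi r_1+(1-\pi)r_0=r$ (so that $r$ acts as an effective Bernoulli parameter for $d$) is a step the paper leaves implicit, and it is exactly the right point to make.
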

Within our propensity-risk framework, 
\begin{equation}
    \label{report1}
R^2_{p;e}\leq R^2_{t;\pi,e}
\end{equation}
and
\begin{equation}
    \label{report2}
R^2_{p;d}\leq R^2_{t;r,d}.
\end{equation}
The inequalities of (\ref{report1}) and (\ref{report2}) justify the inequalities of (\ref{piineq}) and (\ref{rineq}), and also the corresponding inequalities of (\ref{fcineq}) and (\ref{scineq}). In words, real models built from incomplete sets of covariates should have $R^2$ coefficients that are less than or equal to idealized models built from a theoretically complete set of measured and unmeasured covariates.
\section{Application}
\label{applications}
Here we analyze observed associations between marijuana use and hard drug use, with and without conditioning on age and gender, and with and without knowledge of relevant $R^2$ coefficients. Based on \cite{Alexander2017} we set
\begin{equation}\label{R2s}R^2_{p;\textrm{marijuana}}=0.30\textrm{~and~}R^2_{p;\textrm{hard drugs}}=0.20.\end{equation} Table \ref{tab2} shows a marginal association between marijuana use and hard drug use. Table \ref{tab1} shows conditional associations between marijuana use and hard drug use. The data of Tables \ref{tab2} and \ref{tab1} were obtained from The Population Assessment of Tobacco and Health (PATH) Study \cite{PATH}. In Appendix \ref{AppData} we provide detailed descriptions of how the data was obtained and how the variables were defined. Here we will interpret the data with the methods of Section \ref{methods}. These example applications are meant to illustrate our methodology, not to support any scientific conclusions.

If we apply Theorem \ref{th1} to the data of Table \ref{tab2} it results in the following estimate. $100\%$ of the population is estimated to have 
\[(\pi=.37,r_0=.2,r_1=.32).\]
In other words, we may assign the population relative risk of $16$ or the population risk difference of $.30$ to each individual. Something similar happens when we apply Corollary \ref{condhomo}. In that case we may assign the conditional relative risk values of Table \ref{tab1} or the conditional risk difference values of Table \ref{tab1} to each individual of a given age (range) and gender.

\begin{table}[ht]
\centering
\caption{A $2\times 2\times 5\times 2$ contingency table relating Marijuana use, hard drug use, age, and gender.}
\label{tab1}
\begin{tabular}{lrccrccr}
&  \multicolumn{2}{c}{Male} && \multicolumn{2}{c}{Female}  \\
\toprule
 && \multicolumn{2}{c}{Marijuana} && \multicolumn{2}{c}{Marijuana}  \\
\cmidrule{3-4} \cmidrule{6-7} 
Age&Hard Drugs&No&Yes&Hard Drugs&No&Yes&\\
\hline
15-25&Yes&8&26&Yes&6&22&\\
&No&613&193&No&695&242&\\
\hline
&& \multicolumn{2}{c}{Marijuana} && \multicolumn{2}{c}{Marijuana}  \\
\cmidrule{3-4} \cmidrule{6-7} 
Age&Hard Drugs&No&Yes&Hard Drugs&No&Yes&\\
\hline
25-35&Yes&11&44&Yes&6&48&\\
&No&403&154&No&459&178&\\
\hline
&& \multicolumn{2}{c}{Marijuana} && \multicolumn{2}{c}{Marijuana}  \\
\cmidrule{3-4} \cmidrule{6-7} 
Age&Hard Drugs&No&Yes&Hard Drugs&No&Yes&\\
\hline
35-45&Yes&6&68&Yes&5&58&\\
&No&323&147&No&400&169&\\
\hline
&& \multicolumn{2}{c}{Marijuana} && \multicolumn{2}{c}{Marijuana}  \\
\cmidrule{3-4} \cmidrule{6-7} 
Age&Hard Drugs&No&Yes&Hard Drugs&No&Yes&\\
\hline
45-55&Yes&11&154&Yes&13&137&\\
&No&317&152&No&403&160&\\
\hline
&& \multicolumn{2}{c}{Marijuana} && \multicolumn{2}{c}{Marijuana}  \\
\cmidrule{3-4} \cmidrule{6-7} 
Age&Hard Drugs&No&Yes&Hard Drugs&No&Yes&\\
\hline
55-65&Yes&6&148&Yes&9&91&\\
&No&267&165&No&321&120&\\
\hline
\end{tabular}
\end{table}


We may make use of the known coefficients of determination that were set in (\ref{R2s}). With those set values we first analyze the marginal data of Table \ref{tab2}. Those set coefficients of determination and the data of Table \ref{tab2} determine the parameters of the optimization problem described in (\ref{transformed}). Without covariates we set  $X=\emptyset$. Application of our LP then produces the solution of Figure \ref{fig1}. 

We have displayed the illustration of Figure \ref{fig1} to highlight the three subpopulations that have emerged. In that figure the area of a dot is proportional to the size of its corresponding subpopulation. The lines of Figure \ref{fig1} represent causal effects. In that figure lines corresponding with subpopulations intersect the left vertical axis at $r_0$ values, and they intersect the right vertical axis at $r_1$ values. We can see in Figure \ref{fig1} how the locations of the dots have explained away some of the marginal association. Overall, the (residual) effects (illustrated with the lines) appear to have become smaller in magnitude. 

\begin{figure}[ht]
\centering
\includegraphics[width=0.8\textwidth]{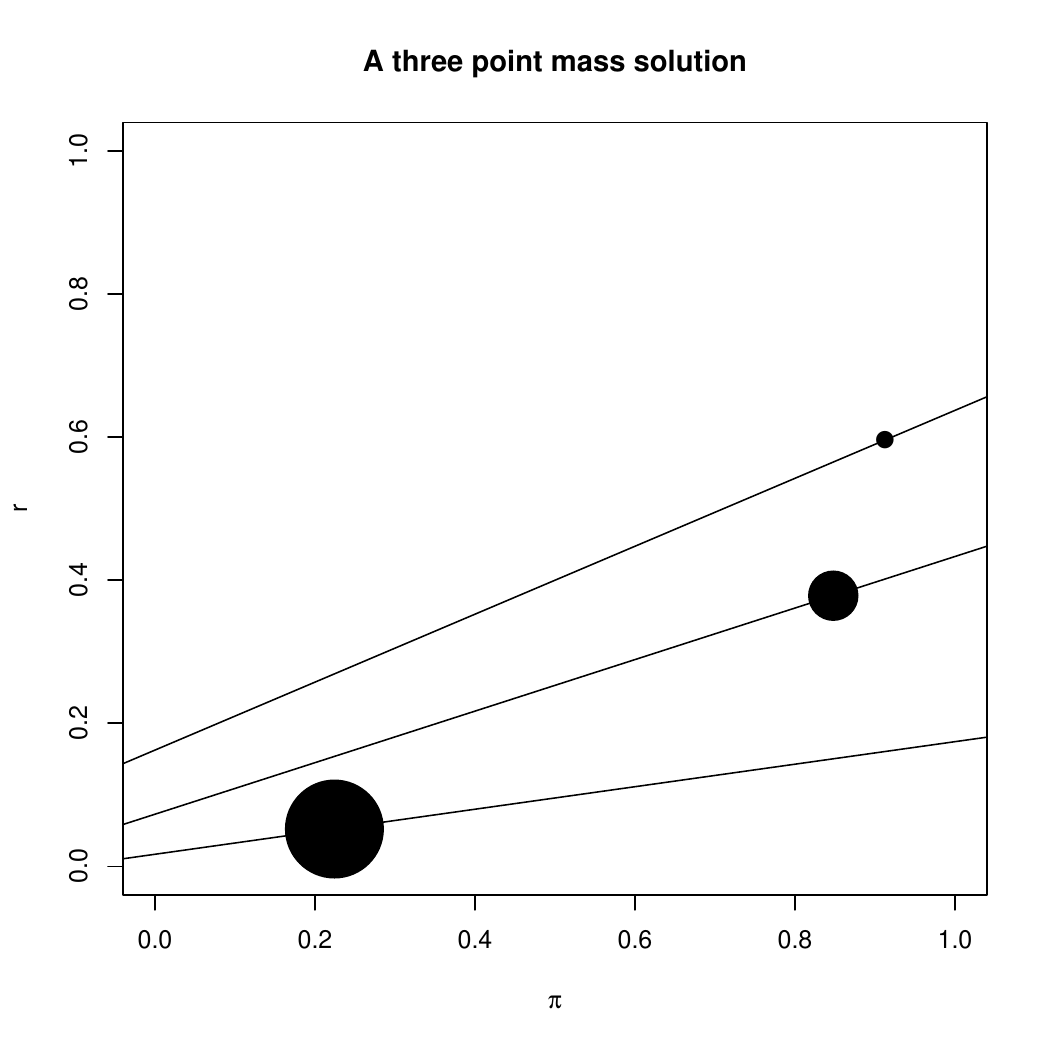}
\caption{Given $R^2_{p;\textrm{marijuana}}=.30$ and $R^2_{r;\textrm{hard drugs}}=.20$ and the data of Table \ref{tab2} we utilize the principle of maximum entropy and solve (\ref{cR}) to estimate that the population consists of $77.5\%$ of individuals with $(\pi=0.22, r_0=0.02,r_1=0.17)$, $20\%$ of individuals with $(\pi=0.85, r_0=0.07,r_1=0.43)$, and the remaining $2.5\%$ of individuals with $(\pi=0.91, r_0=0.16,r_1=0.64)$.} 
\label{fig1}
\end{figure}

Before incorporating knowledge of $R^2$ values into the analysis, we inferred that all individuals had a relative risk of $16$ and a risk difference of $32$. After incorporating knowledge of $R^2$ values into the analysis we infer the following. $77.5\%$ of individuals have a relative risk of $8.5$ and a risk difference of $0.15$. $20\%$ of individuals have a relative risk of $6.1$ and a risk difference of $0.36$. $2.5\%$ of individuals have a relative risk of $4.0$ and a risk difference of $0.48$.  

Something similar happens when we incorporate not just knowledge of $R^2$ values into the analysis, but also measured covariate data. Next we analyze the data of Table \ref{tab1}. Those set coefficients of determination and the data of Table \ref{tab1} determine the parameters of the optimization problem described in (\ref{transformed}). Application of our LP produces the solution of Table \ref{tab3}. 
Remarkably, there are two subpopulations for each level of age and gender. The subpopulation-specific effect sizes $r_1/r_0$ and $r_1-r_0$ are generally smaller, as before, but they also vary depending on age and gender, and generally reflect the magnitudes of the conditional associations in Table \ref{tab1}.
%
\begin{table}[ht]
\centering
\caption{With $R^2_{p;\textrm{marijuana}}=0.30$ and $R^2_{p;\textrm{,hard drugs}}=0.20$ and the data of Table \ref{tab1} our LP has produced the following mixture distribution as the argument maximum solution to the problem in (\ref{transformed}).}
\label{tab3}
\begin{tabular}{llrcclrccr}
&  \multicolumn{3}{c}{Male} && \multicolumn{3}{c}{Female}  \\
\toprule
Age&weight&$\pi$&$r_0$&$r_1$&weight&$\pi$&$r_0$&$r_1$&\\
\hline
15-25&.12&.24&.02&.14&.13&.24&.01&.10&\\
&.004&.74&.04&.14&.01&.74&.04&.11&\\
\hline
25-35&.08&.24&.04&.22&.08&.22&.02&.20&\\
&.01&.79&.04&.29&.01&.76&.04&.26&\\
\hline
35-45&.06&.21&.03&.26&.07&.21&.02&.22&\\
&.02&.81&.04&.38&.02&.79&.04&.31&\\
\hline
45-55&.05&.18&.04&.34&.07&.19&.04&.31&\\
&.04&.86&.11&.56&.04&.86&.09&.54&\\
\hline
55-65&.04&.19&.03&.31&.06&.19&.04&.31&\\
&.04&.86&.06&.52&.02&.85&.07&.52&\\
\hline
\end{tabular}
\end{table}
\section{Discussion}
\label{discussion}
In this paper we have shown how to utilize the principle of maximum entropy to estimate distributions of heterogeneous causal effects. We have seen how to incorporate measured covariate data and knowledge of coefficients of determination within the analysis. Conditioning on a covariate can raise or lower an effect estimate, but adjustment for knowledge of $R^2$ parameters seems to temper the magnitude of any causal interpretation. Since all models are incomplete, typical or common causal inference or interpretation may be generally biased toward more causation. Our introduced methodology can be used to adjust for that bias.

We view this paper as an introductory proof of concept, and there are many ways to extend or expand upon the theoretical results of Section \ref{results}. It's possible that increasing $R^2_{p;e}$ and $R^2_{p;d}$ values can only lower overall magnitudes and measures of overall causality, c.f. \cite{PearlDM}, but it is not yet clear how best to measure or prove that. Also, we have utilized our LP methodology to solve (\ref{transformed}) which approximates (\ref{cR}), but we do not have a proof of correspondence in the limit. We do have empirical evidence to support that correspondence, in the form of LP results converging to the theoretical results that are stated and proven in Section \ref{results}. We also see evidence of convergence when we refine the discretization in the approximation of (\ref{transformed}). This evidence is presented in Figure 2 of Appendix \ref{AppB}. We have witnessed three-point mass distributions as solutions to problems formulated on marginal data, and mixtures of conditional two-point mass distributions as solutions to problems involving measured covariate data. A proof that solutions must be three-point mass distributions or built from two-point mass distributions might lead to an analytic solution of (\ref{cR}) with or without measured covariates. Within the problems described in Section \ref{methods} our constraints may be viewed as known first or second moments. Additional constraints for higher moments or of other types may be added if known.

Our LP utilized tolerance parameters, that we fine tuned to obtain better output. Looser tolerance parameters helped us to find feasible distributions, and then tighter tolerance parameters helped us to produce more accurate output. We plotted solutions and observed the presence of two or three clusters, and then adjusted the approximate results by combining small bits of mass from adjacent subcubes into single chunks of mass, to produce the output presented in Section \ref{applications}. We adjusted by moving mass but only from adjacent subcubes. This technique was tested in special cases and found to agree with the theoretical results of Section \ref{results}. 

We are limited also by the assumptions that went into our model. We utilized a propensity-prognosis model of the data generating process, which many not be appropriate in all settings. In particular, there may be SUTVA violations. Here we are referring to the second assumption of SUTVA as described in \cite[Section 3.2]{Imbens2015}. The second assumption of SUTVA requires that there are not multiple versions of the exposure, so that the potential outcomes are well defined. In our application that may or may not be the case. We do recommend careful specification of times at which exposures and outcomes are measured, and careful specification of times at which propensity and prognosis probabilities are defined. Researchers and analysts are free to choose the times of definition, and we recommend they do so in a way to produce a valid model of the data generating process. SUTVA is an issue because when much time passes between the time of definition and the time of exposure, natural process may give rise to multiple exposures. In the context of our example, chance random events may give rise to both marijuana use and other traits, e.g. alcohol use, which may violate SUTVA. However, our stochastic counterfactuals make SUTVA more believable.

Our results highlight the importance of exposure propensities. In Figure \ref{fig1} and Table \ref{tab3} we see individuals with lower propensity probabilities of the exposure and other individuals with higher propensity probabilities of the exposure. We can enrich the interpretation of common measures of association by reporting $(\pi,r_0,r_1)$ instead of just $r_1/r_0$ or $r_1-r_0$. When contemplating interventions we should consider also the propensity probabilities, and personalized interventions can be tailored for individuals of subpopulations identified with our introduced methodology.

Our numerical approach was not limited by any lack of computing power because our measured covariates had only ten levels. In case of high-dimensional covariate data that may be sparse we can make a ``balance'' assumption \cite[Equation (6)]{Jorke18}. That balance assumption allows us to solve maximum-entropy optimization problems with multivariate, multiple, logistic regression, as described in \cite{Jorke18}, but without $R^2$ parameters. In such an approach the response variable will be $(e,d)$ and the predictors will be the measured covariates in $x$. The fitted values are predicted probabilities for the four joint categories of $(e,d)$. For each individual those probabilities must sum to unity, and they can be transformed into $\{\pi_i,r_{0i},r_{1i}\}$ values, from which we can learn about the distribution of heterogeneous effects.  

If the cell counts are low then there may be uncertainty about the equality constraints of (\ref{cR}), and whether or not they should be required to hold. Next we describe one way to address that issue. We may continue to require that the equality constraints hold as written, but we would repeatedly resample the observed data to produce synthetic samples. On each synthetic sample we would proceed as described in this paper. We would then mix together the resulting solution distributions and interpret the resulting mixture distribution. If the resampled mixture has changed drastically, then we would know that the cell counts are too low. We do not expect our results to be overly sensitive to resampling when cell counts are moderately sized. For details see the content on finite population corrections in \cite{https://doi.org/10.48550/arxiv.2309.02621}.

In conclusion, we highlight how this paper is ultimately about unmeasured covariates. Unmeasured covariates can be handled in a principled manner with the principle of maximum entropy, as we have shown here. It is reasonable to select covariates that might cause $e$ or $d$ \cite{VanderWeele2011}, and we may imagine having done so hypothetically. We may not know the particulars of those unmeasured covariates, but we may reasonably bound related $R^2$ parameters, and those bounds can serve as a relevant summary of those unmeasured covariates; c.f. \cite{KOA,Knaeble2023}. As seen here in Section \ref{applications}, those bounds can have a greater impact on causal interpretation than measured covariates. We have seen how those bounds and the principle of maximum entropy partition an apparently homogeneous population into two or three differing subpopulations. The result is a principled middle-way for causal inference, between the extremes of full causal interpretation of an observed association and skeptical insistence that the association is entirely spurious. The differing subpopulations explain some of the observed association, but some residual amount of the observed association admits a causal interpretation, and the interpretation is enriched, as it points toward more personalized interventions.
%
\bibliographystyle{plain}

\newpage
\appendix
\section{Appendix}
\subsection{Data Preparation}
\label{AppData}
The data of Section \ref{applications} was obtained from The Population Assessment of Tobacco and Health (PATH) Study \cite{PATH}. From ICPSR 36498 we downloaded DS1001 Wave 1: Adult Questionnaire Data with Weights. We utilized raw (unweighted) data on six variables. The first variable was (our exposure $e$) R01\_AX0085 
(Ever used marijuana, hash, THC or grass). Respondents answered yes or no. The outcome variable (our disease $d$) was derived from three variables: R01\_AX0220\_01 (Ever used substance: Cocaine or crack), R01\_AX0220\_02 (Ever used substance: Stimulants like methamphetamine or speed), and R01\_AX0220\_03 (Ever used substance: Any other drugs like heroin, inhalants, solvents, or hallucinogens). We recorded $d=1$ if a respondent answered yes to any of the questions R01\_AX0220\_01, R01\_AX0220\_02, or R01\_AX0220\_03, and we recorded $d=0$ otherwise. The two measured covariates were
R01R\_A\_AGECAT7 (Age range when interviewed (7 levels)) and R01R\_A\_SEX (Gender from the interview (male or female)). We excluded the two oldest age levels, restricting attention to individuals between the ages of $15$ and $65$. We did not weight the data.
\subsection{Convergence}
\label{AppB}
We repeated some trials of our LP on the problems described in (\ref{basic}) and (\ref{c}) with data from Table \ref{tab1}. We compared the numerical results with the theoretical results of Theorem \ref{th1} and Corollary \ref{condhomo}. We increased the $m$ parameter of (\ref{transformed}) from $25$ to $95$ in increments of $5$, and plotted $m$ and the resulting max entropy outputs of the LP on the plots below. The height of the dots is the max entropy according to the LP. The height of the horizontal line is the max entropy according to Theorem \ref{th1} and Corollary \ref{condhomo}.  
\begin{figure}[ht]
\centering
\includegraphics[width=0.8\textwidth]{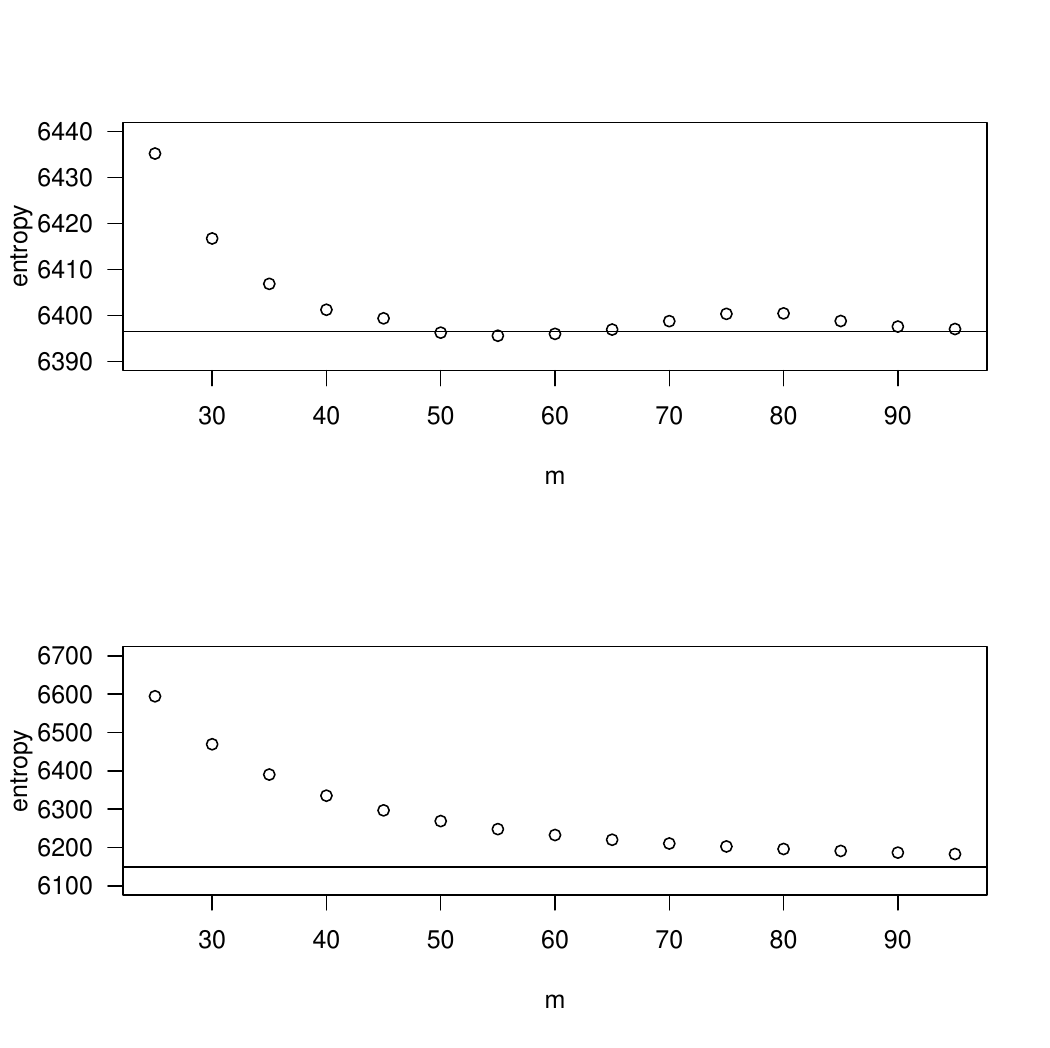}
\caption{Convergence of LP max entropy to the theoretical maximum entropy of problem (\ref{basic}) [top] and problem (\ref{c}) [bottom] with data from Table \ref{tab1}.} 
\label{fig2}
\end{figure}
\end{document}